\newcommand{\B}{\varmathbb{B}}
\newcommand{\R}{\varmathbb{R}}
\DeclareMathOperator{\ex}{\varmathbb{E}}
\DeclareMathOperator{\prob}{\varmathbb{P}}
\theoremstyle{plain}
\newtheorem{theorem}{Theorem}
\newtheorem*{corollary*}{Corollary}
\theoremstyle{definition}
\newtheorem*{definition*}{Definition}
\theoremstyle{remark}
\newtheorem*{remark*}{Remark}
\newcommand{\bM}{\mathbf{M}}
\newcommand{\bh}{\mathbf{h}}
\newcommand{\br}{\mathbf{r}}
\newcommand{\bx}{\mathbf{x}}
\newcommand{\by}{\mathbf{y}}
\newcommand{\bz}{\mathbf{z}}
\def\b0{0}
\newfont{\bb}{msbm10 scaled 1100}
\begin{document}
\bstctlcite{BSTcontrol}


\title{Effects of Mobility on User Energy Consumption and Total Throughput in a Massive MIMO System}

\author{%
Aris~L.~Moustakas$^{\star \dagger}$
,
Luca Sanguinetti$^{\ddagger \dagger} $
and
M{\'e}rouane Debbah$^{\dagger}$
\thanks{%
$^\star$: \small{Physics Department of the University of Athens, Greece};

$^\dagger$: \small{Alcatel-Lucent Chair, Ecole sup{\'e}rieure d'{\'e}lectricit{\'e} (Sup{\'e}lec), Gif-sur-Yvette, France};

$^\ddagger$ \small{Dipartimento di Ingegneria dell'Informazione, University of Pisa, Pisa, Italy}
}
}


\maketitle

\begin{abstract}
Macroscopic mobility of wireless users is important to determine the performance and energy efficiency of a wireless network, because of the temporal correlations it introduces in the consumed power and throughput. In this work we introduce a methodology that obtains the long time statistics of such metrics in a network. After describing the general approach, we present a specific example of the uplink channel of a mobile user in the vicinity of a massive MIMO base-station antenna array. To guarantee a fixed SINR and rate, the user inverts the path-loss channel power, while moving around in the cell. To calculate the long time distribution of the consumed energy of the user, we assume his movement follows a Brownian motion, and then map the problem to the solution of the minimum eigenvalue of a partial differential equation, which can be solved either analytically, or numerically very fast.  We also treat the throughput of a single user. We then discuss the results and how they can be generalized if the mobility is assumed to be a Levy random walk. We also provide a roadmap to use this technique when one considers multiple users and base stations.
\end{abstract}


\section{Introduction}
\label{sec:Introduction}

Perhaps the most challenging property of the wireless propagation channel is its temporal variability. Since the first mobile telephones appeared at a massive scale, engineers had to address the temporal fluctuations of the received signal power to make sure that a call connection remained active. Adverse effects that needed to be countered include (a) fading holes of the channel due to multiple wave reflections, (b) change in the link distance due to physical movement of the wireless device away from the base, or (c) interference fluctuations due to movement of interfering devices.

To understand the behavior of fading various models were proposed with varying complexity, starting from the  Jakes model\cite{Jakes_book} to more involved correlated models in both frequency and time \cite{Calcev2004_3GPP_SCM}. The introduction of concise fading models has made it possible to obtain, together with numerical simulations, analytical expressions for the quantities of interest, such as ergodic and outage capacities etc\cite{Tulino2004_RMT_review}. This in turn allowed the introduction of ways to counter the adverse effects of fading, through scheduling and space-time coding.

The effects of path-loss have also been studied more recently in a large body of work using the theory of Poisson point processes  (PPP) \cite{Haenggi2008_InterferenceLargeWirelessNets, Baccelli2009_StocGeometryWirelessNetsTheory, Baccelli2009_StocGeometryWirelessNetsApplications}. This approach has provided a good understanding of the effects of randomness in position of mobile devices in a network and has allowed a thorough characterization of the statistics of interference in wireless networks. However, most of the analysis deals with static or near-static networks and does not take into account the consequences of macroscopic mobility of users.

This shortcoming is important when one realizes that due to not too rapid mobility there are temporal correlations in the necessary power for any given user. For example an untypically  high density of users at the cell edge will result to an increased energy consumption over an extended period of time, which may drain the available energy resources of a base-station. This in turn becomes important especially for off-grid deployments, with finite energy resources. Therefore, it is important to quantify not only how often such unlikely events happen, but also how long they last, which depends on the user mobility.

To analyze the effects of mobility, simple yet effective models that describe the statistics of humans moving around are necessary. Several models have been proposed\cite{Camp2002_SurveyMobilityModelsAdHocNetworks, Fan2004_SurveyMobilityModels} and their statistical properties have been studied \cite{LeBoudec2005_StationarityMobilityModels} in detail and have been implemented in numerical simulators. In particular, three types of mobility models are more popular. The first and simplest one is a random walk (RW). This model is a continuous time Markov process on a lattice with a step size distribution which has zero mean and finite variance. At long times and distances this can be approximated by a Brownian motion (BM). Another, more involved one, corresponds to a Markov process (LRW) with infinite variance in the step size distribution, due to the long tails in the step sizes, corresponding to Levy processes. This has been proposed as a more realistic model for human mobility\cite{Rhee2011_LevyHumanMobility, Scafetta2011_LevyWalkHumanMobility}. Finally, the so-called random waypoint process (RWP) has also been proposed, in which a mobile user picks a random destination and travels with constant velocity to reach it, which is also a Markov process. However, despite the well-understood properties of the mobility models above not much progress \cite{Galluccio2010_OpportunisticCommunicationsInfostations}
has been made towards providing analytical results for the long term statistics of communications performance metrics, such energy consumed or total throughput.

In this paper we take advantage of the Markovian property of user mobility to analyze the long time statistics of these performance measures in a network with mobility. We believe that the approach is fairly general to encompass all Markovian mobility models described above, at least in principle. The methodology is based on a simple, but powerful theorem, the so-called Feynman-Kac formula\cite{Simon1979_FunctionalIntegrationQuantumPhysics_book, Pastur_book1992_SpectraRandomOperators}, which maps the average over all random walks to the minimum eigenvalue of an partial differential equation, which is related to the random walk.

\subsection{Outline}
\label{sec:Outline}

In the next section we will define the metrics of interest, in terms of an integral over time. We will introduce the mobility model and show how we can calculate the probability of finding a user at a given location using the diffusion equation. In Section \ref{sec:math_formalism} we state the main result of the paper, namely Theorem \ref{thm:MainResult} and outline its proof. In Section \ref{sec:EnergyStatistics} we calculate the long time statistics of the uplink  energy consumption, while in Section \ref{sec:Throughput Statistics} we do the same for the total throughput. Finally, in Section \ref{sec:discussion} we discuss how these methods can be generalized to other situations.

\section{Problem Statement and Setup}
\label{sec:problem_statement}

The purpose of this paper is to present a methodology to analyze the long time $T$ statistics of quantities of the form
\begin{equation}\label{eq:energy_def}
E_T = \int_0^T dt\, V(\br(t))
\end{equation}
where $V(\bx)$ is a function of the position of one or more mobile devices in a network and the devices move around the region of interest over time. We will assume here that $V(\bx)\geq 0$ and that it is bounded from above.

\subsection{Metrics of Interest}
\label{sec:metrics}

$V(\bx)$ is a function that can represent a number of relevant metrics of interest. In this paper, we will deal with two specific functions.

\subsubsection{Energy}
\label{sec:energy}

In the first  case $V(\bx)$ is proportional to the inverse of the pathloss function between a given user and the nearest base-station.
Hence we have
\begin{equation}\label{eq:power_def}
V_p(\br) = P(\br)  =\gamma r^\beta
\end{equation}
where $\beta$ is the pathloss exponent and $r$ is the distance between the user and the base-station. Here the integral in \eqref{eq:energy_def} will correspond to the total energy consumption of the mobile over time $T$. For simplicity we treat only a single cell which we take to be square with side $R$ and the base-station located at the center. This problem may be generalized straightforwardly to a network of bases located e.g. at a square grid. In that case, we would need to replace the distance $r$ above by $r_{min} = \min_{\br_i} \left| \br-\br_i\right|^\beta$ where $\br_i$ is the location of the $i$th base-station.

The above quantity is the power necessary for a user at a distance $r^\beta$ from a base-station to maintain signal to noise ratio equal to $\gamma$ in the presence of unit variance noise, no other interference and no fading. Despite its simplicity it is not difficult to show that this is asymptotically correct in the massive MIMO setting, where a finite number of users $K$ is served with a very large antenna array of $N$ antennas at the base. Indeed, the $N$ dimensional received signal vector $\by$ at the base is
\begin{equation}\label{eq:Massive_MIMO_def}
 \by = \sum_{k=1}^K \bh_k g(\br_k)^{1/2} x_k + \bz
\end{equation}
where $\bh_k$ for $k=1,\ldots,K$ is the $N$-dimensional channel vector for user $k$, with elements assumed for simplicity to be $\sim {\cal CN}(0,N^{-1})$, $g(\br)=|\br|^{-\beta}$ is the corresponding pathloss function and $x_k$ the transmitted signal, with $\bz$ the $N$-dimensional noise vector with elements $\sim {\cal CN}(0,1)$. Then, in the limit $N\gg K$ the SINR for each user becomes \cite{Tulino2004_RMT_review} asymptotically
\begin{equation}\label{eq:Massive_MIMO_SINR}
 SINR_k = g(\br_k) \ex\left[|x_k|^2\right]
\end{equation}
Requiring this to be equal to $\gamma$ results to \eqref{eq:power_def} after we realize that when the user switches base stations when it reaches the edge of the cell. The pathloss function inversion can easily be implemented through the periodic feedback of a channel quality indication (CQI) to the mobile device, as is usually done. Hence the above power control scheme corresponds to situations where the user needs a constant rate.

\subsubsection{Throughput}
\label{sec:throughput}

A dual uplink transmission strategy to the above for a mobile user in a network corresponds to transmit continuously at a constant power and take advantage of the instances when the channel is good due to proximity to a base-station. In this case the power transmitted is fixed, but it is the communication rate that is fluctuating with distance. This can be expressed as
\begin{equation}\label{eq:capacity_def}
V_c(\br) = C(\br)  =\alpha \min\left(\log\left[1+\frac{p}{r^\beta}\right],R_{max}\right)
\end{equation}
where $r$ is defined as above, $R_{max}=\log(1+p/r_0^\beta)$ is the maximum rate achieved at distance $r_0$ and $\alpha$, $p$ are constants. Here the integral of \eqref{eq:energy_def} will correspond to the total throughput uploaded over time $t$ in bits. The above expression has two interpretations, depending on the context. In the case of a single base antenna, single user case, it corresponds to the outage capacity at location $r_{min}$. In this case, $\alpha=1-p_{out}$ is the probability of non-outage, while $p=-P\log(1-p_{out})$ and $\int dt C(\br(t))$ will correspond to the total goodput. In a massive MIMO multi-user setting, $\alpha=1$ and $p$ is the signal to noise ratio.

\subsection{Mobility Model}
\label{sec:MobilityModel}

We will now specify the dynamics of user mobility. In particular, we assume that the user of interest moves according to a continuous time Markov process. The infinitesimal generator of the process is denoted by the operator $\bM_0$ acting on the space of square integrable functions $\ell^2(\R^2)$. Hence the probability that a user is at location $\br$ at time $t$, given that he was at location $\br_0$ at time $t=0$ can be expressed in terms of $\bM_0$  as follows\cite{Simon1979_FunctionalIntegrationQuantumPhysics_book}
\begin{equation}\label{eq:Markov_definition}
  \prob(\br, t; \br_0, 0) = e^{-\bM_0t}(\br, \br_0)
\end{equation}
where the right hand side is the $\br$, $\br_0$-matrix element of the exponential operator. For concreteness, we will only assume the user performs the simplest Markov process, namely a Brownian motion. This is known to be a good approximation for the long time, large distance properties of a Markov process with finite step variance \cite{Bouchaud_book_FinancialRiskDerivativePricing}. In this case, the infinitesimal generator is simply
\begin{equation}\label{eq:M0_def}
\bM_0 = -\frac{D}{2} \nabla^2
\end{equation}
i.e. proportional to the Laplacian operator, with proportionality constant $D$, the diffusion constant. In this case, the above probability satisfies the diffusion equation
\begin{equation}\label{eq:diffusion_def}
  \frac{\partial \prob(\br, t; \br_0, 0)}{\partial t} = \frac{D}{2} \nabla^2 \prob(\br, t; \br_0, 0)
\end{equation}
with initial condition   $\prob(\br, 0; \br_0, 0) = \delta(\br-\br_0)$, where $\delta(\bx)$ is the two dimensional Dirac $\delta$-function.

We will also need to specify the boundary conditions of the Brownian motion. Specifically, we will assume periodic boundary conditions. This means that the mobile user, when he moves through the boundary of the cell, re-appears from the other side with the same direction. While it is not particularly realistic to assume such a behavior from a user, it is not hard to see that this is a good way to mimic the hand-over to a neighboring cell region. This will be discussed further elsewhere.

%

\section{Mathematical Framework}
\label{sec:math_formalism}

In this section we will prove the main result of the paper, which provides the long time behavior of the metrics introduced above. Then we will obtain analytic limiting behavior for small and large values of the energy and the throughput, as well as the behavior close to the mean.

\begin{theorem} \label{thm:MainResult}
Let $V(\br)\geq 0$ be a continuous, upper bounded function of the distance $\br$ and $E_T$ be given by \eqref{eq:energy_def}, in which the time-dependence of the position $\br(t)$ is due to a Brownian motion on a square $\B=(-R/2,R/2)^2$ with diffusion constant $D$, periodic boundary conditions and initial condition $\br_0=\br(0)$. Then if $A\subset \R$ we have
\begin{eqnarray}\label{eq:thm_1}
\lim_{T\to\infty} \frac{1}{T}\log\prob(E_T/T\in A) &=&  - \inf_{x\in A} I(x)\\
  \label{eq:thmI(x)}
I(x) &=& -\inf_{\lambda\in \R} \{\lambda x-\epsilon_0(\lambda)\}
\end{eqnarray}
where
$\epsilon_0(\lambda)$ is the minimum eigenvalue of the operator
\begin{equation}\label{eq:thm_M_def}
  \bM_\lambda = -\frac{D}{2}\nabla^2 + \lambda V(\br)
\end{equation}
\end{theorem}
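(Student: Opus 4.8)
The plan is to read Theorem~\ref{thm:MainResult} as a one-dimensional large deviation principle (LDP) for the additive functional $E_T=\int_0^T V(\br(t))\dd t$ of Brownian motion on $\B$ (a flat torus, by the periodic boundary conditions), and to obtain it from the G\"artner--Ellis theorem after computing the scaled cumulant generating function $\Lambda$ via the Feynman--Kac formula. First I would fix $\mu\in\R$ and apply Feynman--Kac on $\B$: for any bounded continuous $f$,
\[
\ex_{\br_0}\!\left[e^{\mu E_T}f(\br(T))\right]=\left(e^{-\bM_{-\mu}T}f\right)(\br_0),\qquad \bM_\lambda=-\tfrac{D}{2}\nabla^2+\lambda V(\br),
\]
so with $f\equiv\mathbf{1}$ this gives $\ex_{\br_0}[e^{\mu E_T}]=(e^{-\bM_{-\mu}T}\mathbf{1})(\br_0)$. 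Since $V\ge 0$ is continuous and bounded and $\B$ is compact, $\bM_\lambda$ is self-adjoint, bounded below, and has compact resolvent on $\ell^2(\B)$, hence purely discrete spectrum $\epsilon_0(\lambda)<\epsilon_1(\lambda)\le\cdots$.

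Next I would extract the long-time behaviour spectrally. The heat semigroup $e^{-\bM_\lambda T}$ is positivity improving (Laplacian plus bounded potential on a connected compact manifold; see \cite{Simon1979_FunctionalIntegrationQuantumPhysics_book}), so the Perron--Frobenius/Krein--Rutman argument shows that the principal eigenvalue $\epsilon_0(\lambda)$ is simple and isolated and its eigenfunction $\psi_\lambda$ may be taken strictly positive. The spectral expansion then yields
\[
(e^{-\bM_\lambda T}\mathbf{1})(\br_0)=e^{-\epsilon_0(\lambda)T}\,\psi_\lambda(\br_0)\,\langle\psi_\lambda,\mathbf{1}\rangle+\bigoh\!\left(e^{-\epsilon_1(\lambda)T}\right),
\]
and because $\psi_\lambda(\br_0)>0$ and $\langle\psi_\lambda,\mathbf{1}\rangle=\int_\B\psi_\lambda>0$ the prefactor is nonzero. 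Hence $\Lambda(\mu)\defeq\lim_{T\to\infty}\tfrac1T\log\ex_{\br_0}[e^{\mu E_T}]=-\epsilon_0(-\mu)$ exists, does not depend on $\br_0$, and is finite for every $\mu\in\R$ because $|\epsilon_0(\lambda)|\le|\lambda|\,\sup_\B V$ by the min--max principle together with $-\tfrac{D}{2}\nabla^2\ge 0$.

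Then I would invoke G\"artner--Ellis. Treating $\lambda\mapsto\bM_\lambda$ as a holomorphic family of type (A) and using that $\epsilon_0(\lambda)$ is a simple isolated eigenvalue, Kato's analytic perturbation theory makes $\epsilon_0$, and therefore $\Lambda$, real-analytic on all of $\R$; in particular $\Lambda$ is finite-valued, differentiable, lower semicontinuous and (vacuously, its domain being all of $\R$) steep, so G\"artner--Ellis delivers the LDP for $E_T/T$ with good rate function
\[
\Lambda^*(x)=\sup_{\mu\in\R}\{\mu x-\Lambda(\mu)\}=\sup_{\mu\in\R}\{\mu x+\epsilon_0(-\mu)\}=-\inf_{\lambda\in\R}\{\lambda x-\epsilon_0(\lambda)\}=I(x),
\]
i.e. $\limsup_{T\to\infty}\tfrac1T\log\prob(E_T/T\in F)\le-\inf_{x\in F}I(x)$ for closed $F$ and $\liminf_{T\to\infty}\tfrac1T\log\prob(E_T/T\in G)\ge-\inf_{x\in G}I(x)$ for open $G$. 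Because $\Lambda$ is finite and smooth, $I=\Lambda^*$ is convex, proper and continuous on the interior of its effective domain, so $\inf_{A^\circ}I=\inf_{\bar A}I$ for every interval $A$ and the two bounds collapse to the clean limit \eqref{eq:thm_1}--\eqref{eq:thmI(x)}. One could instead first establish the Donsker--Varadhan LDP for the occupation measure $L_T=\tfrac1T\int_0^T\delta_{\br(s)}\dd s$ and then apply the contraction principle to $x=\int_\B V\dd L_T$, recovering the same $I$ by convex duality, but the route above is the more economical.

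I expect the only real obstacle to lie in the spectral step: showing that $e^{-\bM_\lambda T}$ is positivity improving, so that $\epsilon_0(\lambda)$ is a non-degenerate eigenvalue with strictly positive ground state. That single fact does double duty --- it secures a nonzero overlap $\langle\psi_\lambda,\mathbf{1}\rangle$ (pinning the exponential rate at exactly $-\epsilon_0(\lambda)$, uniformly in $\br_0$ over the compact $\B$) and, through simplicity of $\epsilon_0(\lambda)$, supplies the differentiability of $\Lambda$ that G\"artner--Ellis requires. The hypotheses that $V$ be continuous and bounded are precisely what make $\bM_\lambda$ a well-behaved self-adjoint operator with these properties; the remaining steps are routine.
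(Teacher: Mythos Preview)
Your proposal is correct and follows essentially the same route as the paper: compute the scaled cumulant generating function via the Feynman--Kac formula, extract its large-$T$ limit from the spectral decomposition of $\bM_\lambda$ so that only the principal eigenvalue survives, and then conclude the LDP by a Legendre transform. The paper invokes ``Cram\'er's theorem'' where you (more appropriately for a non-i.i.d.\ additive functional) invoke G\"artner--Ellis, and you are more explicit about the ingredients that make the spectral step work---positivity improving of the semigroup, simplicity of $\epsilon_0(\lambda)$, and analyticity via Kato perturbation theory---but the architecture is the same.
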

\begin{proof}
We now sketch the basic steps of the proof. We start by applying Cramer's theorem\cite{Dembo_book_LargeDeviationsTechniques} to find that in the limit of large $T$ $\log\prob(E_T\in A)$ obeys a large deviation principle with rate function $I(x)$ so that
\begin{eqnarray}
&&  \lim_{T\to\infty} \frac{1}{T} \log Prob(E_T/T\in A) = - \inf_{x\in A} I(x)\\
  \label{eq:I(x)}
&& I(x) = -\inf_{\lambda\in \R} \{\lambda x+ \Lambda(\lambda)\} \\
&& \Lambda(\lambda) = \lim_{T\to \infty} \frac{1}{T}\log\ex_{\br_0}\left[e^{-\lambda E_T}\right] \label{eq:Lambda_def}
\end{eqnarray}
In the above, the expectation is over Brownian paths (random walks) with initial condition $\br(0)=\br_0$. Also, note that the second line has an $\inf$ rather than a $\sup$ as is customary, due to the fact that we have $\Lambda(\lambda)$ with a negative sign in the exponent.

Now, the main trick in this proof is to take advantage of a famous result, namely the Feynman-Kac (FK) formula\cite{Pastur_book1992_SpectraRandomOperators, Simon1979_FunctionalIntegrationQuantumPhysics_book} which states that
\begin{eqnarray}
e^{-\bM_\lambda T}(\br_0, \br_T) =  \prob(\br_T,T; \br_0,0) \ex_{\br_0,\br_T}\left[e^{-\lambda E_T}\right]
\end{eqnarray}
where the right hand side is an expectation over all Brownian motions starting at $r_0$ and ending at $\br_T$ at time $T$.  The operator $\bM_0$ is the infinitesimal generator of the semigroup corresponding to the mobility process and in the case of the simple Brownian motion $\bM_0$ is given in \eqref{eq:M0_def}.  We can relate this equation to the right-hand-side of \eqref{eq:Lambda_def}, we need to integrate the above over $\br_T$, of course with the appropriate probability of the path, namely $\prob(\br_T,T; \br_0,0)$.

The left-hand side of the FK  formula can be expressed very simply using the spectral decomposition of $\bM_\lambda$. Let $\phi_n(\br)$ be the eigenfunctions of $\bM_\lambda$ with corresponding eigenvalue $\epsilon_n(\lambda)$. Then we have
\begin{eqnarray}
e^{-\bM_\lambda T}(\br_0, \br) & =& \sum_{n=0}^\infty \phi_n(\br_0)\phi_n(\br)\, e^{-\epsilon_n(\lambda) T}
\end{eqnarray}
The periodic boundary conditions imposed above mean that the eigenfunctions $\phi_n(\br)$ and their derivatives $\nabla \phi_n(\br)$ have to be continuous on opposite boundaries, i.e. $(x,-R/2)\rightarrow (x,R/2)$ and $(-R/2,y)\rightarrow (R/2,y)$. Integrating over the final position $\br_T$, we obtain
\begin{eqnarray}
 \ex_{\br_0}\left[e^{-\lambda E_T}\right] & =& \sum_{n=0}^\infty \theta_n \phi_n(\br_0)\, e^{-\epsilon_n(\lambda) T} \\ \nonumber
\theta_n & =& \int_{\br\in \B} \phi_n(\br) d\br
\end{eqnarray}
As a result, in the large $T$ limit we have
\begin{eqnarray}
 \ex_{\br_0}\left[e^{-\lambda E_T}\right] \approx \theta_0 \phi_0(\br_0)\, e^{-\epsilon_0(\lambda) T}
\end{eqnarray}
Combining this equation together with \eqref{eq:I(x)}-\eqref{eq:Lambda_def} completes the proof.
\end{proof}
\begin{remark*}
It should be pointed out that there are analogous (but non-local) expressions for generators of stable processes as well\cite{Donsker1975_AsymptoticsWienerSausage}. Also, discrete analogues of the Laplacian can also be found, corresponding to discrete space (continuous time) Markov processes. The FK formula essentially finds the right way to weight the dynamics of the user for which all locations in the cell are equal and the weighting of $V$, which is different as a function of $\br$.
\end{remark*}

We thus see that in the large $T$ limit, we are left with the technical task of finding the minimum eigenvalue $\epsilon_0(\lambda)$ of the operator $\bM_\lambda$ for all $\lambda$. We then plug these into \eqref{eq:I(x)} and solve for $\lambda$. Both steps can be done numerically with not too much effort. However, as we shall see now, we can obtain limiting results for the tails of the distribution analytically.

\begin{figure*}[t]
\centering
\subfigure
[Rate function for energy $I_e\left(\frac{E_T}{T}\right)$]
{\label{fig:rate_fun_energy}
\includegraphics[width=\columnwidth]{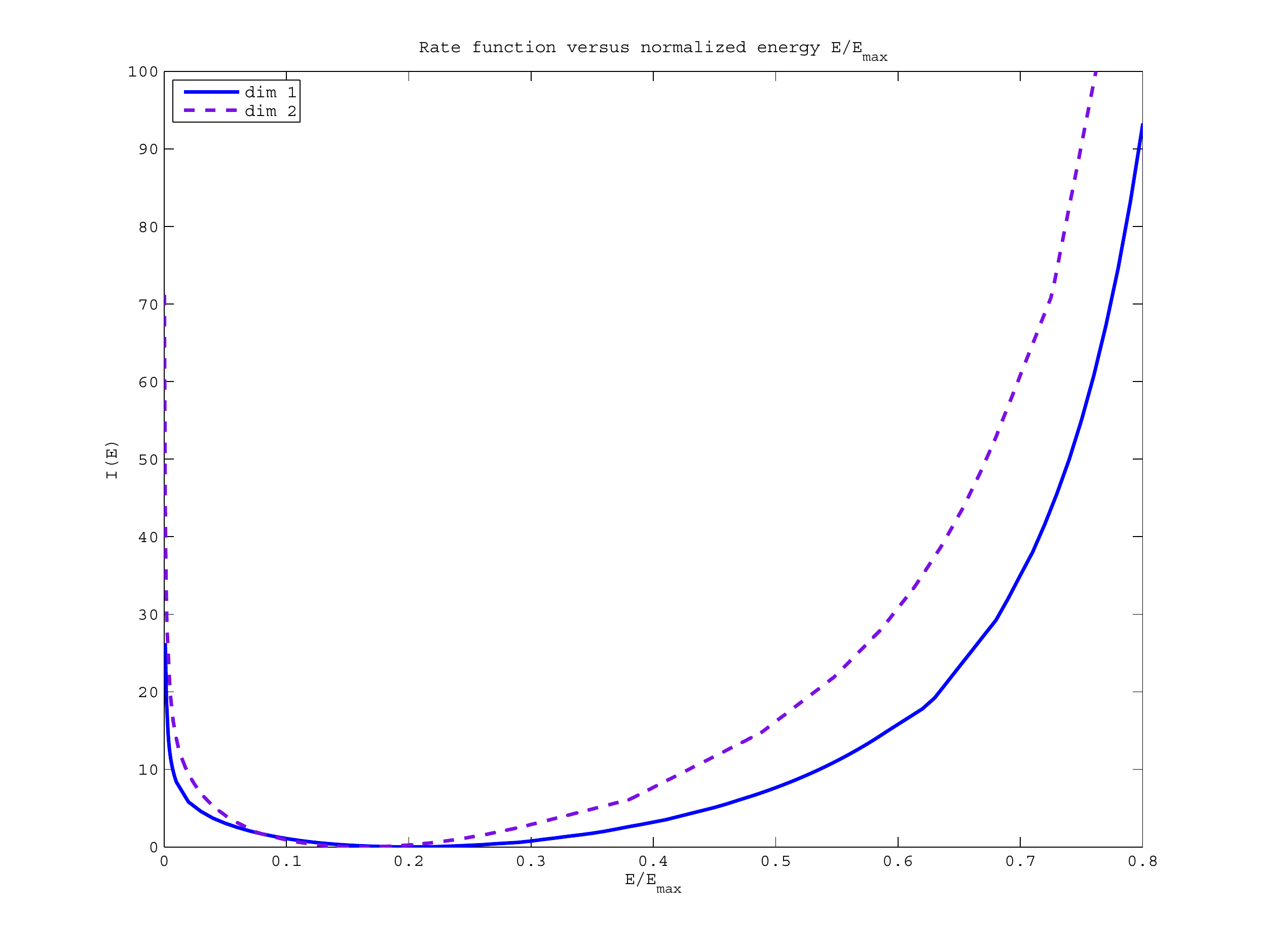}}
\hfill
\subfigure
[Rate function of average throughput $I_b\left(\frac{B_T}{T}\right)$]
{\label{fig:rate_fun_bits}
\includegraphics[width=\columnwidth]{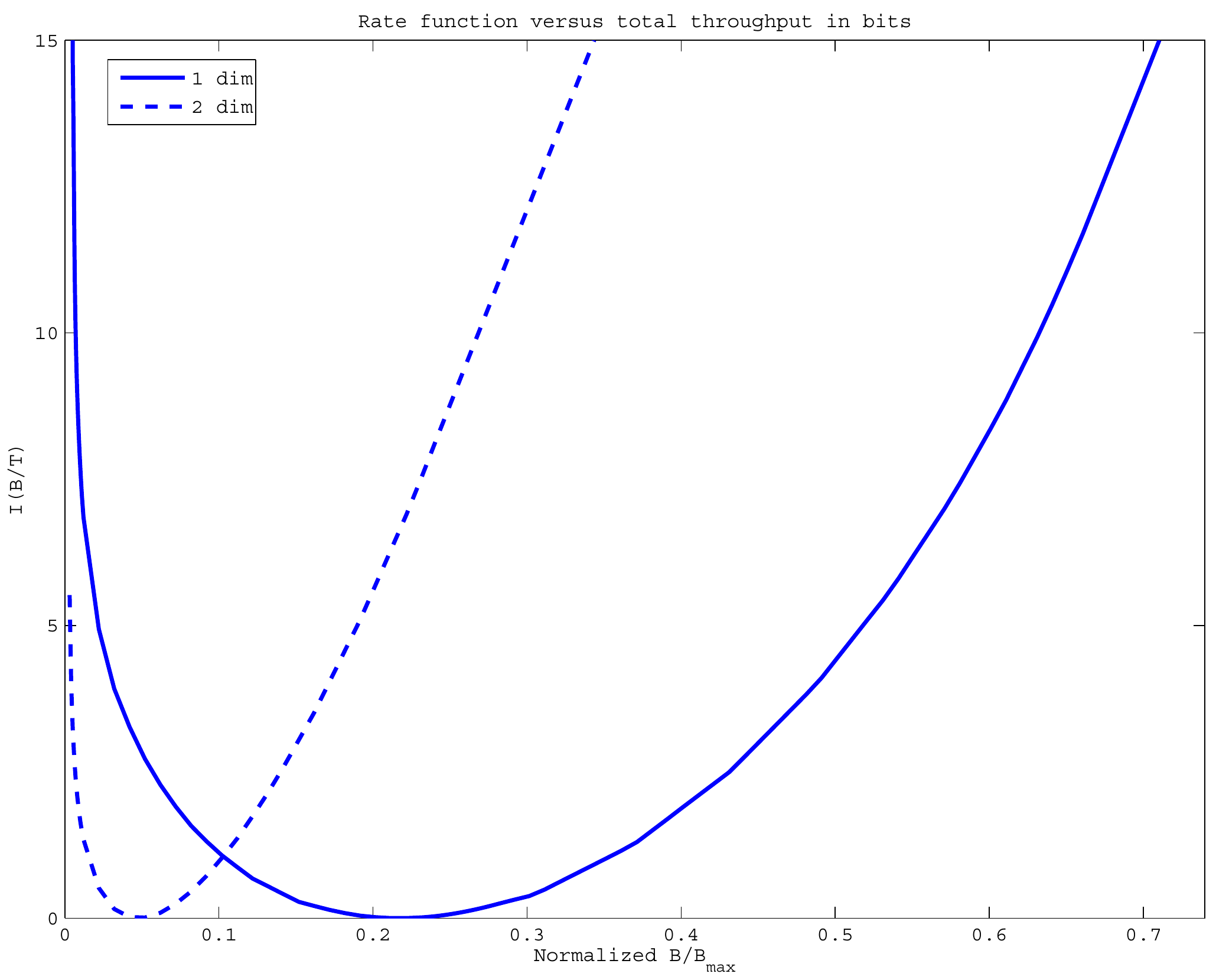}}
\caption{In Figure (a) we plot the rate function for the distribution of the energy of the mobile user. The rate function is plotted for both a one- and two-dimensional square cell. To fit both curves in the same plot the $x$-axis indicates the energy value normalized to its maximal value. Although close to the point where the rate function vanishes, which occurs at the mean value of the energy, the shape is quadratic, its distinctly non-Gaussian shape becomes evident beyond that region. The diverging behavior at the two boundary values, as analyzed in Section \ref{sec:EnergyStatistics} are visible in the plot. In Figure (b) the rate function for the throughput distribution is plotted. The values $\beta=4$, $p=0.01$, $r_0=0.1$ and $\alpha=1$ are used. The very different behavior between one- and two-dimensional rate values for large values of $B$ are due to the lower density of points for $r<r_0$ for two- versus one-dimensional cells.}
\vspace{-1ex}
\end{figure*}

\section{Energy Statistics}
\label{sec:EnergyStatistics}

In this section we will discuss the results for the energy statistics. In this case the quantity of interest is
\begin{equation}\label{eq:energy1}
  E_T = \gamma \int_0^T r(t)^\beta dt
\end{equation}
Plugging this into the methodology above we may obtain the rate function $I_e(x)$, which provides the leading (exponential) term in the distribution of $E_T$ for large $T$.  To do this we will need to find the minimum eigenvalue of the operator in \eqref{eq:thm_M_def} for $V(\br) = \gamma r^\beta$.

We will start by deriving limiting analytic results for large and small values of $E_T$. Starting slightly backwards, we will derive limiting results for the case when $\lambda$ is large and positive. This will correspond for the occurrence probability of unusually small values of the energy. In this case, the minimum eigenvalue $\epsilon_0(\lambda)$ will have an eigenfunction localized close to the bottom of $V(\br)$, namely the center of the cell. As a first approximation, we can forget the boundary of the cell and extend it to infinity. We can therefore rescale the distance variable to $\br  = \overline{r} \bx$, where $\overline{r} = \left(D\lambda^{-1}\gamma^{-1}\right)^{1/(\beta+2)}$ and eliminate the dependence on $\lambda$. The resulting minimum eigenvalue becomes approximately equal to
\begin{equation}\label{eq:min_eig_lambda>0}
  \epsilon(\lambda)\approx D^{\frac{\beta}{\beta+2}} \gamma^{\frac{2}{\beta+2}} \lambda^{\frac{2}{\beta+2}} \varepsilon_0
\end{equation}
where $\varepsilon_0$ is the minimum energy of $\bM_\lambda$ in $\R^2$ with $\lambda=D=1$. After some algebra we obtain
\begin{equation}\label{eq:I(x)_low_energy}
  I_e(x) \approx \frac{D}{R^2} \left\{\frac{\beta}{2}\left(\frac{2\varepsilon_0}{\beta+2}\right)^{1+\frac{2}{\beta}}\right\} \left(\frac{x}{\gamma R^\beta}\right)^{-\frac{2}{\beta}}
\end{equation}
which is valid when $E_T\ll \gamma R^\beta T$.

To obtain the more interesting tails for $E_T\gg P_{mean} T$, we need to analyze the case for large negative $\lambda$. We now see that the minimum of $\lambda V(\br)$ is at the corners of the cell. We now need to do expand the term $|\br|^\beta$ around the value $r_{max}^\beta$, $r_{max}=R/\sqrt{2}$. As a result and  after a shift and $45^o$ rotation of axes, we have
\begin{equation}\label{eq:H_large E}
\bM_\lambda \approx \lambda P_{max} +\bM_0 -\lambda \gamma \beta r_{max}^{\beta-1}\max(|x|,|y|) +O(|\lambda|r_{max}^{\beta-2})
\end{equation}
where $P_{max}=\gamma r_{max}^\beta$. In this case we get
\begin{equation}\label{eq:I(x)_large E}
 I_e(x) \approx \frac{D}{R^2} \frac{4\varepsilon_m^3}{27} \left(1-\frac{x}{P_{max}}\right)^{-2}
\end{equation}
where $\varepsilon_m$ is the minimum eigenvalue of the ``inverted tetrahedron'' operator
\begin{equation}\label{eq:inverted_tetrahedron}
\bM_{eff} = -\frac{1}{2}\nabla^2+\max(|x|,|y|)
\end{equation}
in $L^2(\R^2)$.

Finally, we may obtain the behavior for $E_T\approx T P_{mean}$. With some hindsight, we look in the region of small $|\lambda|$.In this case we treat $\lambda V(\br)$ as small and after performing second order perturbation theory \cite{CohenTannoudji_book} we find that
\begin{equation}\label{eq:gaussian_approx}
\epsilon_0(\lambda)\approx \lambda P_{mean} -\lambda^2\sum_{n\neq 0}\frac{V_{n0}^2}{\epsilon_n}
\end{equation}
where $V_{n0}$ is the expectation of $V(\br)$ over the eigenfunction $\phi_{n0}(\br)$ of the Laplacian and $\epsilon_{n0}$ the corresponding eigenvalue in the square domain. It then turns out that $I(x)$ takes the form
\begin{equation}\label{eq:I(x)gaussian_approx}
  I_e(x)\approx \frac{\left(x-P_{mean}\right)^2}{2\sigma^2}
\end{equation}
where $\sigma^2$ is twice the term multiplying $\lambda^2$ above. This turns out to be the expression for the variance obtained in \cite{Sanguinetti2013_OptimalLinearPrecoding_ICASSP}.

In Fig. \ref{fig:rate_fun_energy}(a) we plot the numerically generated rate function of the energy $I_e(x)$ by calculating numerically the minimum eigenvalue $\epsilon_0(\lambda)$ of \eqref{eq:thm_M_def} with $V(r)=\lambda \gamma r^\beta$ in a square  of unit length for various values of $\lambda$. Then for any given value of $x$, we use this function to find the minimum of $\lambda x-\epsilon_0(\lambda)$. This minimum value is  plotted in the figure. We also did the same for a one-dimensional by finding the minimum eigenvalue the same operator on unit length line. We see that $I_e(x)$ vanishes when $x=P_{mean}=\ex_\br [P(\br)]$ where the expectation is over the cell.. The rate function plotted there provides information about the distribution of the energy $E_T$, as was discussed in \eqref{eq:thmI(x)}. Indeed for $E_T>P_{mean} T$ the probability distribution of $E_T$ is, to logarithmic accuracy equal to
\begin{equation}\label{eq:Prob_energy_E_T>P_mean}
  \prob(E_T>x T)\sim e^{-I_e(x)T}
\end{equation}
A similar expression holds for $E_T<P_{mean} T$.

\section{Throughput Statistics}
\label{sec:Throughput Statistics}

In this section we will present results for the statistics of user throughput in the uplink. As discussed earlier, the appropriate functional here is the integrated rate given by $B_T =\int_0^T C(\br(t)) dt$. As in the previous section we can use this to obtain the rate function $I_b(x)$, which provides the  leading (exponential) term in the distribution of the total bits transmitted $B_T$ for large $T$.

In Fig. \ref{fig:rate_fun_energy}(b) we plot the numerically generated rate function of the throughtput $I_b(B_T/T)$ evaluated similarly as in the previous section for the energy. The larger difference between one- and two-dimensional values of the rate for larger values of the throughput are due to the lower density close to the center in two-versus one-dimensional geometries. As also discussed above the rate function $I_b(x)$ provides to logarithmic accuracy the probability that $B_T<xT$,  thus providing a metric for total throughput ``outage'' probability. Indeed when  $B_T<C_{mean} T$, where $C_{mean}=\ex_\br C(\br)]$ is the value of $x=B_T/T$ where $I_b(x)=0$,  the probability distribution of $B_T$ is, to logarithmic accuracy equal to
\begin{equation}\label{eq:Prob_B_T<C_mean}
  \prob(B_T>x T)\sim e^{-I_b(x)T}
\end{equation}

\section{Discussion and Outlook}
\label{sec:discussion}

Summarizing the contributions of this paper, we have provided a technique to obtain the probability distribution of performance metrics, such as the total throughput and the consumed energy over time for wireless systems by exploiting the statistics of mobility. This methodology can help improve network design and dimensioning, by providing analytic results for low probability events of high energy consumption and/or throughput irregularities. As specific examples, we calculated the long time distribution of the consumed uplink power  and the corresponding total throughput of a single user in a massive MIMO cell, assuming that the user moves according to a Brownian motion.

It should be briefly mentioned that we can generalize the above discussion for Levy random Markov processes that have infinite variance of each step, corresponding to long tailed distributions\cite{Rhee2011_LevyHumanMobility, Scafetta2011_LevyWalkHumanMobility}.  The only difference in the above discussion is the choice of an appropriate infinitesimal generator of the process, which will now be a symmetric stable law of index $\alpha<2$ \cite{Donsker1979_NumberDistinctSitesRW}. In this case $\bM_0$ does not have a local representation (as a derivative), but is still well defined \cite{Donsker1975_AsymptoticsWienerSausage, Donsker1979_NumberDistinctSitesRW, Widom1963_ExtremeEIgenvaluesConvolutionOperators}.
Using the same arguments as above, we can obtain the long term statistics for this mobility model as well. While we leave the more involved presentation of results for a longer paper it is worth mentioning briefly how the above results are expected to change. Focusing for brevity only on the energy case, since $\bM_0$ will have scale dimensions of $R^{-\alpha}$, we can rescale the equations once again and find that for large positive $\lambda$ the minimum eigenvalue will be $\epsilon_0(\lambda)\sim \lambda^{\alpha/(\alpha+\beta)}$. Conversely, for large negative $\lambda$ the minimum eigenvalue will be $\epsilon_0(\lambda)-\lambda r_{max}^\beta\sim \lambda^{\alpha/(1+\alpha)}$. Putting these together which then correspond to a rate function
\begin{eqnarray}\label{eq:I(x)_alpha_small E}
 I_{e,low}(x,\alpha) &\sim&  x^{-\alpha/\beta} \\
 \label{eq:I(x)_alpha_large E}
 I_{e, high}(x,\alpha) &\sim& \left(1-\frac{x}{r_{max}^\beta}\right)^{-\alpha}
\end{eqnarray}

We can also generalize the above results to larger systems with many base-stations. In such situations a user switches between base-stations when crossing the cell boundary in which case the energy consumption in the uplink continues to increase, or in the downlink the power associated to that user is switched off. Mathematically, this has the effect of having a periodic power function, when the cells are assumed to appear in an ordered fashion. Another obvious generalization has to do with taking into account orthogonal (such as OFDMA) channels and treating the total downlink sum-throughput and/or power comsumption. In this system, we end up with an operator $\bM$ describing multiple Brownian motions.

\footnotesize
\bibliographystyle{IEEEtran}
\bibliography{IEEEabrv,C:/Users/ARISLM/ALMDocuments/Dropbox/Work/CurrentWork/bibliography/wireless}

\end{document}